\documentclass[letterpaper,11pt]{article}
\usepackage{fullpage}
\usepackage{graphicx}
\usepackage{comment}
\usepackage{mathtools}
\usepackage[linesnumbered,ruled,vlined]{algorithm2e}
\usepackage[utf8]{inputenc}
\usepackage{tikz}
\usepackage{amsthm}
\usepackage{subcaption}
\usepackage{amsfonts}
\usepackage{hyperref}
\usepackage[framemethod=TikZ]{mdframed}
\usepackage{amsthm}
\usepackage{thmtools}
\usepackage{thm-restate}
\usepackage[normalem]{ulem}
\usepackage{float}
\usepackage{multirow}

\newtheorem{remark}{Remark}
\newtheorem{theorem}{Theorem}

\definecolor{bostonuniversityred}{rgb}{0.8, 0.0, 0.0}
%\everymath{\color{bostonuniversityred}}
%\everydisplay{\color{red}}
%\def\m@th{\normalcolor\mathsurround\z@}

\newcounter{theo}[section] \setcounter{theo}{0}

\newcommand{\Michael}[1]{}\newcommand{\Saeed}[1]{}
\begin{document}
	
	\title{Erd\"{o}s-Szekeres Partitioning Problem}
	\author{
		Michael Mitzenmacher\thanks{Harvard University}
		\and Saeed Seddighin\thanks{Toyota Technological Institute at Chicago}
	}
	\maketitle

\begin{abstract}
	In this note, we present a substantial improvement on the computational complexity of the Erd\"{o}s-Szekeres partitioning problem and review recent works on dynamic \textsf{LIS}.
\end{abstract}

\section{Erd\"{o}s-Szekeres Partitioning Problem}\label{sec:intro}
It is well-known that any sequence of size $n$ can be decomposed into $O(\sqrt{n})$ monotone subsequences. The proof follows from a simple fact: Any sequence of length $n$ contains either an increasing subsequence of length $\sqrt{n}$ or a non-increasing subsequence of length $\sqrt{n}$. Thus, one can iteratively find the maximum increasing and the maximum non-increasing subsequences of a sequence and take the larger on as one of the solution partitions. Next, by removing the partition from the original sequence and repeating this procedure with the remainder of the elements we obtain a decomposition into at most $O(\sqrt{n})$ partitions. The computational challenge is to do this in an efficient way. The above algorithm can be implemented in time $O(n^{1.5} \log n)$ if we use patience sorting in every iteration. Bar-Yehuda and Fogel~\cite{yehuda1998partitioning} improve the runtime down to $O(n^{1.5})$ by designing an algorithm that solves \textsf{LIS} in time $O(n + k^2)$ where the solution size is bounded by $k$. Since any comparison-based solution for \textsf{LIS} takes time $\Omega(n \log n)$, the gap for Erd\"{o}s-Szekeres partitioning problem has been $\Omega(\sqrt{n}/\log n)$ for quite a long time~\cite{pettie2003shortest,gronlund2014threesomes}.

We show in the following that using the recent work of Mitzenmacher and Seddighin~\cite{our-stoc-paper},  Erd\"{o}s-Szekeres partitioning problem can be solved in time $\tilde O_{\epsilon}(n^{1+\epsilon})$ for any constant $\epsilon > 0$.

\begin{theorem}\label{theorem:main}
	For any constant $\epsilon > 0$, one can in time $\tilde O_{\epsilon}(n^{1+\epsilon})$ partition any sequence of length $n$ of distinct integer numbers into $O_{\epsilon}(\sqrt{n})$ monotone (increasing or decreasing) subsequences.
\end{theorem}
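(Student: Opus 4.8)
The plan is to keep the classical greedy outline---repeatedly peel off a long monotone subsequence and recurse on what remains---but to replace the exact patience-sorting step, which costs $\tilde O(m)$ per peel and hence $\tilde O(n^{1.5})$ in total, by an \emph{approximate} monotone-subsequence extraction powered by the dynamic \textsf{LIS} machinery of~\cite{our-stoc-paper}. Concretely, I would maintain two data structures over the current set of surviving elements: one that tracks a $(1-\delta)$-approximation of the longest increasing subsequence, and one that does the same for the longest decreasing subsequence (obtained by negating the values, or by reversing comparisons). Both support deletion of elements and, crucially, can report an explicit witness subsequence whose length is within a $(1-\delta)$ factor of the optimum. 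Here $\delta$ is a constant chosen as a function of $\epsilon$.

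A single iteration then proceeds as follows: query both structures, take the longer of the two reported witnesses, record it as one partition, and delete its elements from both structures. By the Erd\H{o}s--Szekeres fact quoted above, a surviving set of size $m$ has an increasing or a decreasing subsequence of length at least $\sqrt{m}$, so the longer witness has length at least $(1-\delta)\sqrt{m} = \Omega_\epsilon(\sqrt{m})$. Writing $m_i$ for the number of survivors before the $i$-th iteration, we therefore have $m_{i+1} \le m_i - \Omega_\epsilon(\sqrt{m_i})$; the standard recurrence (comparing to $\tfrac{d}{di}\sqrt{m_i} = -\Omega_\epsilon(1)$) shows that the process terminates after $O_\epsilon(\sqrt{n})$ iterations, yielding the desired $O_\epsilon(\sqrt{n})$ monotone subsequences.

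For the running time, I would account separately for initialization, deletions, and queries. Initialization of the two structures costs $\tilde O_\epsilon(n^{1+\epsilon})$. Every element is deleted exactly once over the whole run, and if each deletion costs $\tilde O_\epsilon(n^{\epsilon})$ amortized then all deletions together cost $\tilde O_\epsilon(n^{1+\epsilon})$. There are only $O_\epsilon(\sqrt{n})$ queries; charging each query a cost of $\tilde O_\epsilon(n^{1/2+\epsilon})$ plus the length of the witness it returns, and using that the witness lengths sum to $n$ (each element is reported in exactly one partition), the queries contribute $O_\epsilon(\sqrt{n})\cdot \tilde O_\epsilon(n^{1/2+\epsilon}) + \tilde O(n) = \tilde O_\epsilon(n^{1+\epsilon})$. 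Summing the three gives the claimed bound.

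The step I expect to be the real obstacle is making the extraction subroutine actually produce a \emph{witness} subsequence within the amortized budget, rather than merely an approximate \emph{value} of the longest monotone subsequence---which is what approximate dynamic \textsf{LIS} data structures most naturally maintain. Two points need care: (i) the deletions are not arbitrary but come in correlated batches (an entire extracted subsequence at once), so one must verify that the amortized deletion bound of~\cite{our-stoc-paper} survives this usage pattern; and (ii) the approximation factor must stay bounded below by a fixed constant throughout, so that every peel removes $\Omega_\epsilon(\sqrt{m})$ elements even as $m$ shrinks and the structure is queried on ever-smaller surviving sets. Once witness reporting and batched deletion are in place with these guarantees, the counting and the time accounting above go through directly.
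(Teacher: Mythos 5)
Your proposal follows essentially the same route as the paper's own proof: two instances of the dynamic \textsf{LIS} structure of~\cite{our-stoc-paper} (one for increasing, one for decreasing order), greedy peeling of the longer approximate witness, the recurrence $m_{i+1}\le m_i-\Omega_{\epsilon}(\sqrt{m_i})$ giving $O_{\epsilon}(\sqrt{n})$ partitions, and the same amortized accounting of insertions, deletions, and witness-retrieval time. The obstacle you single out---getting an explicit witness rather than just an approximate value---is precisely the point the paper acknowledges and resolves in Section~\ref{sec:dynamic}, where it argues that witness recovery in time proportional to the reported length follows implicitly from the segment DP of~\cite{our-stoc-paper}.
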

\begin{proof}
	The proof follows directly from the algorithm of Mitzenmacher and Seddighin~\cite{our-stoc-paper} for dynamic \textsf{LIS}. In their setting, we start with an empty array $a$ and at every point in time we are allowed to (i) add an element, or (ii) remove an element, or (iii) substitute an element for another. The algorithm is able to update the sequence and estimate the size of the \textsf{LIS} in time $\tilde O_{\epsilon}(|a|^{\epsilon})$ where $|a|$ is the size of the array at the time the operation is performed. The approximation factor of their algorithm is constant as long as $\epsilon$ is constant. More precisely, their algorithm estimates the size of the longest increasing subsequence within a multiplicative factor of at most $(1/\epsilon)^{O(1/\epsilon)}$. Although Mitzenmacher and Seddighin~\cite{our-stoc-paper} do not explicitly state this, it implicitly follows from their algorithm that by spending additional time proportional to the reported estimation, their algorithm is able to also find an increasing subsequence with size equal to their reported length. We bring a more detailed discussion for this in Section~\ref{sec:dynamic}.
	
	Given a sequence of length $n$ with distinct numbers, we use the algorithm of Mitzenmacher and Seddighin~\cite{our-stoc-paper} to decompose it into $O_{\epsilon}(\sqrt{n})$ monotone subsequences in time $\tilde O_{\epsilon}(n^{1+\epsilon})$. To do so, we initialize two instances of their algorithm that keep an approximation to the longest increasing subsequence and the longest decreasing subsequence of the array. More precisely, in the first instance, we insert all elements of the array exactly the same way they appear in our sequence and in the second instance we insert the elements in the reverse order. Thus the dynamic algorithm for the second instance always maintains an approximation to the longest decreasing subsequence of our array.
	
	In every iteration, we estimate the size of the longest increasing and longest decreasing subsequences of the array via the algorithm of Mitzenmacher and Seddighin~\cite{our-stoc-paper}. We then choose the maximum one and ask the algorithm to give us the sequence corresponding to the solution reported. Finally, we remove the elements from both instances of the dynamic algorithm and repeat the same procedure for the remainder of the elements.
	
	The total runtime of our algorithm is $\tilde O_{\epsilon}(n^{1+\epsilon})$ since we insert $n$ elements in each of the instances and then remove $n$ elements which amount to $2n$ operations for each instance that runs in time $\tilde O_{\epsilon}(n^{1+\epsilon})$. Moreover, because at every point in time the maximum estimate we receive from each of the dynamic algorithms is at least a constant fraction of the actual longest increasing subsequence, we repeat this procedure at most $O_{\epsilon}(\sqrt{n})$ times. Therefore, we decompose the sequence into $O_{\epsilon}(\sqrt{n})$ monotone subsequences.
\end{proof}

\begin{remark}
	The constant factor hidden in the $O$ notation for the number of partitions is optimal in neither the algorithm of Theorem~\ref{theorem:main} nor previous algorithm of~\cite{yehuda1998partitioning} nor the simple greedy algorithm that runs patience sorting in every step.
\end{remark}

\section{Subsequent Work}
Since the algorithm of Mitzenmacher and Seddighin~\cite{our-soda-paper} has constant approximation factor, in order to make sure the number of partitions remains $O(\sqrt{n})$, one needs to set $\epsilon$ to constant and therefore the gap between their runtime of $\tilde O(n^{1+\epsilon})$ and the lower bound of $\Omega(n \log n)$ remains polynomial. Two independent subsequent work further tighten the gap. Kociumaka and Seddighin~\cite{saeednew} improve the gap to subpolynomial by presenting a dynamic algorithm with approximation factor $1-o(1)$ and update time $O(n^{o(1)})$. Gawrychowski and Janczewski~\cite{gawrychowski2020fully} further tighten the gap to polylogarithmic by obtaining a similar algorithm with polylogarithmic update time (with polynomial dependence on $1/\epsilon$).

\section{The Dynamic Algorithm of Mitzenmacher and Seddighin~\cite{our-stoc-paper}}\label{sec:dynamic}
In this section, we bring the high-level ideas of the dynamic algorithm of Mitzenmacher and Seddighin for \textsf{LIS} and explain why using this algorithm, we can also find the increasing subsequence corresponding to the reported solution size. Their algorithm is based on the grid packing technique explained in Section~\ref{sec:grid}. We then discuss the dynamic algorithm in Section~\ref{sec:results-approach}.

\subsection{Background: Grid Packing}\label{sec:grid}
Grid packing can be thought of as a game between us and an adversary. In this problem, we have a table of size $m \times m$. Our goal is to introduce a number of segments on the table. Each segment either covers a consecutive set of cells in a row or in a column. A segment $A$ \textit{precedes} a segment $B$ if \textbf{every} cell of $A$ is strictly higher than every cell of $B$ and also \textbf{every} cell of $A$ is strictly to the right of every cell of $B$. Two segments are \textit{non-conflicting}, if one of them precedes the other one. Otherwise, we call them \textit{conflicting}.  The segments we introduce can overlap and there is no restriction on the number of segments or the length of each segment. However, we would like to minimize the maximum number of segments that cover each cell. 

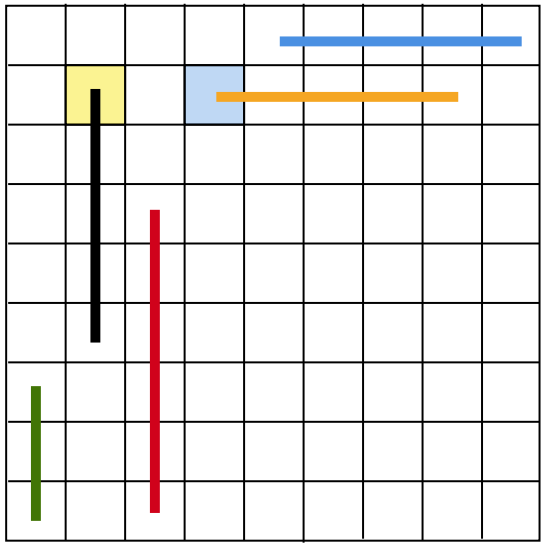
\begin{figure}[ht]

\centering

\tikzset{every picture/.style={line width=0.75pt}} %set default line width to 0.75pt        

\begin{tikzpicture}[x=0.75pt,y=0.75pt,yscale=-1,xscale=1]
%uncomment if require: \path (0,300); %set diagram left start at 0, and has height of 300

%Shape: Rectangle [id:dp5254311647291401] 
\draw   (181,11) -- (450,11) -- (450,281) -- (181,281) -- cycle ;
%Straight Lines [id:da3305762449852494] 
\draw    (301,10) -- (301,281) ;

%Straight Lines [id:da3206498977891954] 
\draw    (331,10) -- (331,282) ;

%Straight Lines [id:da02838390385871148] 
\draw    (361,10) -- (361,280) ;

%Straight Lines [id:da6891763144045235] 
\draw    (211,10) -- (211,281) ;

%Straight Lines [id:da38683484497784937] 
\draw    (241,10) -- (241,281) ;

%Straight Lines [id:da7490078717616151] 
\draw    (271,10) -- (271,281) ;

%Straight Lines [id:da3531030059687228] 
\draw    (391,10) -- (391,281) ;

%Straight Lines [id:da43894336811181467] 
\draw    (421,10) -- (421,280) ;

%Straight Lines [id:da27078630886751465] 
\draw    (182,41) -- (450,41) ;

%Straight Lines [id:da005150497881268201] 
\draw    (182,71) -- (450,71) ;

%Straight Lines [id:da23147705862104506] 
\draw    (182,101) -- (450,101) ;

%Straight Lines [id:da7425549844424117] 
\draw    (182,131) -- (450,131) ;

%Straight Lines [id:da2351732964828932] 
\draw    (182,161) -- (450,161) ;

%Straight Lines [id:da7401347371733205] 
\draw    (182,191) -- (450,191) ;

%Straight Lines [id:da8968761436786192] 
\draw    (182,221) -- (450,221) ;

%Straight Lines [id:da8562431949289493] 
\draw    (182,251) -- (450,251) ;

%Straight Lines [id:da7212412833117718] 
\draw [color={rgb, 255:red, 208; green, 2; blue, 27 }  ,draw opacity=1 ][line width=3.75]    (256,114) -- (256,267) ;

%Straight Lines [id:da7675592926230685] 
\draw [color={rgb, 255:red, 74; green, 144; blue, 226 }  ,draw opacity=1 ][line width=3.75]    (319,29) -- (441,29) ;

%Straight Lines [id:da4112941771444574] 
\draw [color={rgb, 255:red, 65; green, 117; blue, 5 }  ,draw opacity=1 ][line width=3.75]    (196,203) -- (196,271) ;

%Shape: Square [id:dp4777174644258826] 
\draw  [color={rgb, 255:red, 0; green, 0; blue, 0 }  ,draw opacity=1 ][fill={rgb, 255:red, 248; green, 231; blue, 28 }  ,fill opacity=0.48 ] (211,41) -- (241,41) -- (241,71) -- (211,71) -- cycle ;
%Straight Lines [id:da09165043527230288] 
\draw [line width=3.75]    (226,53) -- (226,181) ;

%Shape: Square [id:dp8680283690742998] 
\draw  [color={rgb, 255:red, 0; green, 0; blue, 0 }  ,draw opacity=1 ][fill={rgb, 255:red, 74; green, 144; blue, 226 }  ,fill opacity=0.35 ] (271,41) -- (301,41) -- (301,71) -- (271,71) -- cycle ;
%Straight Lines [id:da0026998791284389423] 
\draw [color={rgb, 255:red, 245; green, 166; blue, 35 }  ,draw opacity=1 ][line width=3.75]    (287,57) -- (409,57) ;

\end{tikzpicture}
\caption{Segments are shown on the grid. The pair (black, orange) is conflicting since the yellow cell (covered by the black segment) is on the same row as the blue cell (covered by the orange segment). The following pairs are non-conflicting: (green, black), (green, orange), (green, blue), (red, orange), (red, blue), (black, blue).} \label{fig:crossing}
\end{figure}

After we choose the segments, an adversary puts a non-negative number on each cell of the table. The score of a subset of cells of the table would be the sum of their values and the overall score of the table is the maximum score of a path of length $2m-1$ from the bottom-left corner to the top-right corner. In such a path, we always either move up or to the right.

The score of a segment is the sum of the numbers on the cells it covers. We obtain the maximum sum of the scores of a non-conflicting set of segments.  The score of the table is an upper bound on the score of any set of non-conflicting segments. We would like to choose segments so that the ratio of the score of the table and our score is bounded by a constant, no matter how the adversary puts the numbers on the table. More precisely, we call a solution $(\alpha,\beta)$-approximate, if at most $\alpha$ segments cover each cell and it guarantees a $1/\beta$ fraction of the score of the table for us for any assignment of numbers to the table cells.

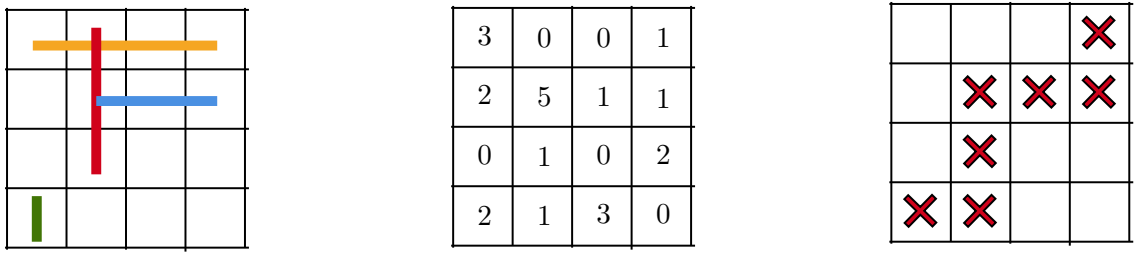
\begin{figure}[ht]

\centering

\tikzset{every picture/.style={line width=0.75pt}} %set default line width to 0.75pt        

\begin{tikzpicture}[x=0.75pt,y=0.75pt,yscale=-1,xscale=1]
%uncomment if require: \path (0,172); %set diagram left start at 0, and has height of 172

%Straight Lines [id:da6633043819092925] 
\draw    (270,30) -- (270,152) ;

%Straight Lines [id:da6069877637104277] 
\draw    (300,31) -- (300,151) ;

%Straight Lines [id:da5789537013543653] 
\draw    (330,30) -- (330,152) ;

%Straight Lines [id:da6178882799434375] 
\draw    (360,30) -- (360,153) ;

%Straight Lines [id:da33725599772480996] 
\draw    (390,30) -- (390,151) ;

%Straight Lines [id:da736940795940934] 
\draw    (269,31) -- (392,31) ;

%Straight Lines [id:da9741470082112216] 
\draw    (269,61) -- (392,61) ;

%Straight Lines [id:da2779143347704205] 
\draw    (269,91) -- (392,91) ;

%Straight Lines [id:da3530720908879399] 
\draw    (269,121) -- (392,121) ;

%Straight Lines [id:da038384016048926606] 
\draw    (269,151) -- (392,151) ;

%Straight Lines [id:da5998474774869911] 
\draw    (45,31) -- (45,153) ;

%Straight Lines [id:da5597571026286936] 
\draw    (75,32) -- (75,152) ;

%Straight Lines [id:da16479280447754507] 
\draw    (105,31) -- (105,153) ;

%Straight Lines [id:da3085712099513582] 
\draw    (135,31) -- (135,154) ;

%Straight Lines [id:da12613347387221086] 
\draw    (165,31) -- (165,152) ;

%Straight Lines [id:da6266120443422065] 
\draw    (44,32) -- (167,32) ;

%Straight Lines [id:da5072506675360497] 
\draw    (44,62) -- (167,62) ;

%Straight Lines [id:da7640683512061104] 
\draw    (44,92) -- (167,92) ;

%Straight Lines [id:da5807436513303619] 
\draw    (44,122) -- (167,122) ;

%Straight Lines [id:da2277681145302426] 
\draw    (44,152) -- (167,152) ;

%Straight Lines [id:da8599665215928698] 
\draw    (491,28) -- (491,150) ;

%Straight Lines [id:da9296198582234105] 
\draw    (521,29) -- (521,149) ;

%Straight Lines [id:da5188209661505319] 
\draw    (551,28) -- (551,150) ;

%Straight Lines [id:da7563699076629549] 
\draw    (581,28) -- (581,151) ;

%Straight Lines [id:da7100636887892002] 
\draw    (611,28) -- (611,149) ;

%Straight Lines [id:da22647026815320848] 
\draw    (490,29) -- (613,29) ;

%Straight Lines [id:da7140685594390213] 
\draw    (490,59) -- (613,59) ;

%Straight Lines [id:da7042967357998233] 
\draw    (490,89) -- (613,89) ;

%Straight Lines [id:da8328482917067017] 
\draw    (490,119) -- (613,119) ;

%Straight Lines [id:da535803664626858] 
\draw    (490,149) -- (613,149) ;

%Straight Lines [id:da2001147987658074] 
\draw [color={rgb, 255:red, 245; green, 166; blue, 35 }  ,draw opacity=1 ][line width=3.75]    (58,50) -- (151,50) ;

%Straight Lines [id:da20733782106809806] 
\draw [color={rgb, 255:red, 208; green, 2; blue, 27 }  ,draw opacity=1 ][line width=3.75]    (90,41) -- (90,115) ;

%Straight Lines [id:da3188339231354045] 
\draw [color={rgb, 255:red, 74; green, 144; blue, 226 }  ,draw opacity=1 ][line width=3.75]    (90,78) -- (151,78) ;

%Straight Lines [id:da6087723399785627] 
\draw [color={rgb, 255:red, 65; green, 117; blue, 5 }  ,draw opacity=1 ][line width=3.75]    (60,126) -- (60,149) ;

%Shape: Cross [id:dp18788217453772793] 
\draw  [color={rgb, 255:red, 0; green, 0; blue, 0 }  ,draw opacity=1 ][fill={rgb, 255:red, 208; green, 2; blue, 27 }  ,fill opacity=1 ] (497.54,127.53) -- (499.49,125.52) -- (505.58,131.45) -- (511.51,125.36) -- (513.51,127.31) -- (507.58,133.4) -- (513.67,139.33) -- (511.72,141.33) -- (505.63,135.41) -- (499.7,141.49) -- (497.7,139.54) -- (503.63,133.45) -- cycle ;
%Shape: Cross [id:dp5876950991142253] 
\draw  [color={rgb, 255:red, 0; green, 0; blue, 0 }  ,draw opacity=1 ][fill={rgb, 255:red, 208; green, 2; blue, 27 }  ,fill opacity=1 ] (527.54,127.53) -- (529.49,125.52) -- (535.58,131.45) -- (541.51,125.36) -- (543.51,127.31) -- (537.58,133.4) -- (543.67,139.33) -- (541.72,141.33) -- (535.63,135.41) -- (529.7,141.49) -- (527.7,139.54) -- (533.63,133.45) -- cycle ;
%Shape: Cross [id:dp6329349377511551] 
\draw  [color={rgb, 255:red, 0; green, 0; blue, 0 }  ,draw opacity=1 ][fill={rgb, 255:red, 208; green, 2; blue, 27 }  ,fill opacity=1 ] (527.54,97.53) -- (529.49,95.52) -- (535.58,101.45) -- (541.51,95.36) -- (543.51,97.31) -- (537.58,103.4) -- (543.67,109.33) -- (541.72,111.33) -- (535.63,105.41) -- (529.7,111.49) -- (527.7,109.54) -- (533.63,103.45) -- cycle ;
%Shape: Cross [id:dp5032748735899488] 
\draw  [color={rgb, 255:red, 0; green, 0; blue, 0 }  ,draw opacity=1 ][fill={rgb, 255:red, 208; green, 2; blue, 27 }  ,fill opacity=1 ] (527.54,67.53) -- (529.49,65.52) -- (535.58,71.45) -- (541.51,65.36) -- (543.51,67.31) -- (537.58,73.4) -- (543.67,79.33) -- (541.72,81.33) -- (535.63,75.41) -- (529.7,81.49) -- (527.7,79.54) -- (533.63,73.45) -- cycle ;
%Shape: Cross [id:dp4781774334975044] 
\draw  [color={rgb, 255:red, 0; green, 0; blue, 0 }  ,draw opacity=1 ][fill={rgb, 255:red, 208; green, 2; blue, 27 }  ,fill opacity=1 ] (557.54,67.53) -- (559.49,65.52) -- (565.58,71.45) -- (571.51,65.36) -- (573.51,67.31) -- (567.58,73.4) -- (573.67,79.33) -- (571.72,81.33) -- (565.63,75.41) -- (559.7,81.49) -- (557.7,79.54) -- (563.63,73.45) -- cycle ;
%Shape: Cross [id:dp6534331657458179] 
\draw  [color={rgb, 255:red, 0; green, 0; blue, 0 }  ,draw opacity=1 ][fill={rgb, 255:red, 208; green, 2; blue, 27 }  ,fill opacity=1 ] (587.54,67.53) -- (589.49,65.52) -- (595.58,71.45) -- (601.51,65.36) -- (603.51,67.31) -- (597.58,73.4) -- (603.67,79.33) -- (601.72,81.33) -- (595.63,75.41) -- (589.7,81.49) -- (587.7,79.54) -- (593.63,73.45) -- cycle ;
%Shape: Cross [id:dp5630155449984502] 
\draw  [color={rgb, 255:red, 0; green, 0; blue, 0 }  ,draw opacity=1 ][fill={rgb, 255:red, 208; green, 2; blue, 27 }  ,fill opacity=1 ] (587.54,37.53) -- (589.49,35.52) -- (595.58,41.45) -- (601.51,35.36) -- (603.51,37.31) -- (597.58,43.4) -- (603.67,49.33) -- (601.72,51.33) -- (595.63,45.41) -- (589.7,51.49) -- (587.7,49.54) -- (593.63,43.45) -- cycle ;

% Text Node
\draw (286,136) node   {$2$};
% Text Node
\draw (286,105) node   {$0$};
% Text Node
\draw (286,75) node   {$2$};
% Text Node
\draw (286,45) node   {$3$};
% Text Node
\draw (316,136) node   {$1$};
% Text Node
\draw (316,106) node   {$1$};
% Text Node
\draw (316,76) node   {$5$};
% Text Node
\draw (316,46) node   {$0$};
% Text Node
\draw (346,76) node   {$1$};
% Text Node
\draw (346,105) node   {$0$};
% Text Node
\draw (346,135) node   {$3$};
% Text Node
\draw (376,135) node   {$0$};
% Text Node
\draw (376,105) node   {$2$};
% Text Node
\draw (376,77) node   {$1$};
% Text Node
\draw (346,46) node   {$0$};
% Text Node
\draw (376,46) node   {$1$};

\end{tikzpicture}

\caption{After we introduce the segments (left figure), the adversary puts the numbers on the table (middle figure). In this case, the score of the table is equal to $12$ (via the path depicted on the right figure), and our score is equal to $9$ obtained from two non-conflicting segments green and blue.} \label{fig:crossing}
\end{figure}

Mitzenmacher and Seddighin~\cite{our-stoc-paper} prove the following theorem: For any $m \times m$ table and any $0 < \kappa < 1$, there exists a grid packing solution with  guarantee $(O_{\kappa}(m^\kappa \log m),O(1/\kappa))$. That is, each cell is covered by at most $O_{\kappa}(m^\kappa \log m)$ segments and the ratio of the table's score over our score is bounded by $O(1/\kappa)$ in the worst case. 

\begin{theorem}\label{theorem:grid packing} [from~\cite{our-stoc-paper}] For any $0 < \kappa < 1$, the grid packing
	problem on an $m \times m$ table admits an $(O_{\kappa}(m^\kappa \log m),O(1/\kappa))$-approximate solution.
\end{theorem}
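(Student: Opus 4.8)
The plan is to recast the two quantities geometrically and then design a fixed family of segments whose best non-conflicting chain tracks the best path. Recall that the table score is the maximum weight of a monotone staircase path from the bottom-left to the top-right, while our score is the maximum total weight of a set of pairwise non-conflicting segments; since precedence is a strict partial order, such a set is exactly a chain in which each segment lies strictly above and strictly to the right of the previous one. The starting observation is that an optimal path $P$ of weight $V$ splits into maximal horizontal runs and maximal vertical runs, and because a monotone path never moves left or down, two horizontal runs are column-disjoint and two vertical runs are row-disjoint. Hence same-orientation runs are automatically non-conflicting except that consecutive runs may share a single corner cell; taking every other run removes this obstruction. Thus, if we were allowed to use the runs of $P$ themselves as segments, a chain of weight $\Omega(V)$ would already exist. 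The whole difficulty is that the family must be fixed in advance, so I would replace the runs of the unknown $P$ by \emph{canonical} segments drawn from a small, pre-committed family.

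For the family I would use a $b$-ary hierarchy of scales with branching factor $b=\lceil m^{\kappa}\rceil$ and $L=\lceil\log_b m\rceil=O(1/\kappa)$ levels. At level $\ell$, the canonical horizontal segments are the row-restricted blocks of width $W_\ell=m/b^{\ell}$, one in every row, placed at every starting column that is a multiple of the fine offset $W_\ell/b$; the canonical vertical segments are defined symmetrically on columns. The point of the fine offsets is that any run of length at most $W_\ell$ is contained in a canonical block whose right endpoint is within $W_\ell/b$ of the run's right endpoint, so a single canonical segment can cover an entire run of the corresponding scale. Counting coverage, a fixed cell lies in only $O(b)$ horizontal canonical segments per level (the blocks that both contain it and share its row, which number at most $W_\ell/(W_\ell/b)=b$), and symmetrically for vertical ones; summing over the $L$ levels gives maximum coverage $O(bL)=O(m^{\kappa}/\kappa)$, and allowing for $O(\log m)$ offset grids and boundary corrections yields the claimed $\alpha=O_{\kappa}(m^{\kappa}\log m)$.

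For the approximation I would argue scale by scale. After the adversary reveals the weights, fix the optimal path $P$ and classify each of its runs by scale according to its length, so that the $L$ scales partition the weight of $P$. The plan is to show that at each scale $s$ there is a non-conflicting chain of level-$s$ canonical segments capturing a constant fraction of the weight carried by scale-$s$ runs, and then to merge these $L$ per-scale chains into one global chain. Because scales are nested and the runs of a fixed orientation are already column- or row-disjoint, each scale-$s$ run can be hugged by a canonical block using the fine offsets, and the resulting blocks can be threaded into a chain in path order; summing the constant-fraction guarantees over the $L$ scales and over the two orientations costs a factor $O(L)=O(1/\kappa)$, which is the target $\beta$.

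The step I expect to be the real obstacle is the interaction between value capture and \emph{strict} separation, that is, the overhang-versus-advancement tension. A canonical block wide enough to contain a run of length $\ell$ overhangs its neighbors by up to $W_\ell-\ell$, and consecutive same-scale runs advance only by about $\ell\ge W_\ell/b$; so naive selection forces skipping up to $b$ runs to make blocks strictly separate, which would cost a factor $b$ rather than a constant. The heart of the argument is to show that the fine offsets, together with the freedom to drop down to the next finer scale for the clustered runs, let us either separate the blocks or charge the lost weight to a lower scale, so that the per-scale loss stays constant; this is precisely where the choice $b=m^{\kappa}$ (which bounds the number of same-scale runs inside one block-width window) is balanced against the number of scales $L=O(1/\kappa)$. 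A cleaner alternative I would also try is a direct recursion: partition the $m\times m$ table into a $b\times b$ array of sub-tables of side $m/b$, observe that $P$ visits a monotone staircase of at most $2b-1$ of them, capture the weight that crosses sub-table boundaries with a single canonical segment per crossing, and recurse inside each sub-table; this yields a recursion of the form $A(m)\le A(m/b)+O(1)$ unfolding over $L$ levels, and it localizes the separation argument to a single level, which should make the merge step---the part I find least routine---substantially easier to control.
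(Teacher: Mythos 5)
Your proposal cannot be checked against an in-paper argument, because this note never proves Theorem~\ref{theorem:grid packing}: it is imported wholesale from~\cite{our-stoc-paper}. Judged on its own, your counting half is fine (with $b$-ary widths and offsets at granularity $W_\ell/b$ the coverage is $O(bL)=O_\kappa(m^\kappa)$ per cell, comfortably inside the stated budget), and so is the reduction of the table score to capturing the weight of the horizontal and vertical runs of one fixed optimal path, up to the small repair that you must trim the shared corner cells off the runs rather than ``take every other run'' (two segments in the same row conflict no matter how far apart their columns are, so corner-sharing cannot be fixed by skipping alone). The genuine gap is exactly the step you flag yourself and then leave open: the per-scale lemma, ``at each scale $s$ there is a chain of level-$s$ canonical segments capturing a constant fraction of the weight of scale-$s$ runs.'' This lemma is not merely unproven in your sketch; it is false. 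Take $b$ consecutive horizontal runs in rows $1,2,\dots,b$, occupying the column intervals $[(t-1)g,\,tg)$ for $t=1,\dots,b$ with $g=W_s/b$, one unit of weight on each run. All runs have scale $s$, and all of them lie inside a single column window of width $W_s$. Any level-$s$ segment that captures any of this weight has width $W_s$ and meets that window, and the segments of a chain must have pairwise disjoint column ranges; three pairwise disjoint intervals of length $W_s$ cannot all meet an interval of length $W_s$, so a chain contains at most two useful level-$s$ segments and captures at most $2$ of the $b$ units. The per-scale loss is therefore $\Omega(b)=\Omega(m^\kappa)$, not $O(1)$, and the claimed $\beta=O(1/\kappa)$ does not follow. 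Your proposed repairs do not close this: fine offsets are irrelevant here (the obstruction comes from the width quantization, not the offset quantization), and ``dropping to the next finer scale'' is unavailable in your family, since with $b$-ary widths the next width down, $W_s/b$, leaves no slack to contain a run of length $W_s/b$ at generic positions. The natural fix---dyadic widths, so that a run of length $r$ gets a block of width at most $2r$ and the packed-runs instance is handled by every-other selection---destroys the other half of your accounting, because it replaces $L=O(1/\kappa)$ scales by $\Theta(\log m)$ scales and your ``take the best scale'' step then yields only $\beta=O(\log m)$.

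Your fallback recursion has the same hole, pushed one level down: in $A(m)\le A(m/b)+O(1)$ you need to concatenate the recursively built chains of the sub-tables that the path visits, but sub-tables lying in a common row-band of the $b\times b$ partition pairwise conflict (they share rows), and the weight inside such a band need not be capturable by top-level segments within a constant factor either; the packed-runs instance above lives entirely in one band. So the recursion inequality is asserted, not derived, and deriving it is precisely the unresolved separation problem. What a correct argument needs, and what is missing here, is a way to capture a fixed fraction of each run's weight by a segment that introduces \emph{no} conflicts beyond those already implied by the disjointness of the runs themselves (for instance, segments chosen inside the runs rather than containing them), together with an accounting of how many canonical pieces that requires from an oblivious family; balancing that count against the coverage is where the $m^\kappa\log m$ and the $1/\kappa$ in the theorem actually come from, and it is the content of the proof in~\cite{our-stoc-paper} rather than a detail one can defer.
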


There is a natural connection between grid packing and \textsf{LIS}. Let us consider an array $a$ of length $n$. We assume for the sake of this example that all the numbers of the array are distinct and are in range $[1,n]$. In other words, $a$ is a permutation of numbers in $[n]$. We map the array to a set of points on the 2D plane by putting a point at $(i,a_i)$ for every position $i$ of the array.

\begin{figure}[ht]

\centering

\tikzset{every picture/.style={line width=0.75pt}} %set default line width to 0.75pt        

\begin{tikzpicture}[x=0.75pt,y=0.75pt,yscale=-1,xscale=1]
%uncomment if require: \path (0,369); %set diagram left start at 0, and has height of 369

%Straight Lines [id:da3191885316685632] 
\draw  [dash pattern={on 0.84pt off 2.51pt}]  (147,49) -- (147,320) ;

%Straight Lines [id:da7559101633943939] 
\draw  [dash pattern={on 0.84pt off 2.51pt}]  (177,49) -- (177,321) ;

%Straight Lines [id:da1291104188279324] 
\draw  [dash pattern={on 0.84pt off 2.51pt}]  (207,49) -- (207,319) ;

%Straight Lines [id:da6247914431079504] 
\draw  [dash pattern={on 0.84pt off 2.51pt}]  (57,49) -- (57,320) ;

%Straight Lines [id:da2443453671621807] 
\draw  [dash pattern={on 0.84pt off 2.51pt}]  (87,49) -- (87,320) ;

%Straight Lines [id:da4676476525067903] 
\draw  [dash pattern={on 0.84pt off 2.51pt}]  (117,49) -- (117,320) ;

%Straight Lines [id:da42635072959794496] 
\draw  [dash pattern={on 0.84pt off 2.51pt}]  (237,49) -- (237,320) ;

%Straight Lines [id:da8277079805087988] 
\draw  [dash pattern={on 0.84pt off 2.51pt}]  (267,49) -- (267,319) ;

%Straight Lines [id:da24985081014096644] 
\draw  [dash pattern={on 0.84pt off 2.51pt}]  (28,80) -- (296,80) ;

%Straight Lines [id:da6339328653912732] 
\draw  [dash pattern={on 0.84pt off 2.51pt}]  (28,110) -- (296,110) ;

%Straight Lines [id:da10731858330496191] 
\draw  [dash pattern={on 0.84pt off 2.51pt}]  (28,140) -- (296,140) ;

%Straight Lines [id:da2712941152341055] 
\draw  [dash pattern={on 0.84pt off 2.51pt}]  (28,170) -- (296,170) ;

%Straight Lines [id:da7317347402429573] 
\draw  [dash pattern={on 0.84pt off 2.51pt}]  (28,200) -- (296,200) ;

%Straight Lines [id:da8994404191047722] 
\draw  [dash pattern={on 0.84pt off 2.51pt}]  (28,230) -- (296,230) ;

%Straight Lines [id:da6044897770856661] 
\draw  [dash pattern={on 0.84pt off 2.51pt}]  (28,260) -- (296,260) ;

%Straight Lines [id:da7815256961262274] 
\draw  [dash pattern={on 0.84pt off 2.51pt}]  (28,290) -- (296,290) ;

%Straight Lines [id:da06443896163595575] 
\draw    (27,49) -- (27,320) ;

%Straight Lines [id:da8160886746264304] 
\draw  [dash pattern={on 0.84pt off 2.51pt}]  (297,49) -- (297,319) ;

%Straight Lines [id:da5343639785815872] 
\draw    (28,320) -- (296,320) ;

%Straight Lines [id:da1659083679036335] 
\draw  [dash pattern={on 0.84pt off 2.51pt}]  (28,50) -- (296,50) ;

%Shape: Ellipse [id:dp9836802741092274] 
\draw  [color={rgb, 255:red, 0; green, 0; blue, 0 }  ,draw opacity=1 ][fill={rgb, 255:red, 208; green, 2; blue, 27 }  ,fill opacity=1 ] (51,109.5) .. controls (51,106.46) and (53.46,104) .. (56.5,104) .. controls (59.54,104) and (62,106.46) .. (62,109.5) .. controls (62,112.54) and (59.54,115) .. (56.5,115) .. controls (53.46,115) and (51,112.54) .. (51,109.5) -- cycle ;
%Shape: Ellipse [id:dp8584062429307988] 
\draw  [color={rgb, 255:red, 0; green, 0; blue, 0 }  ,draw opacity=1 ][fill={rgb, 255:red, 208; green, 2; blue, 27 }  ,fill opacity=1 ] (81,259.5) .. controls (81,256.46) and (83.46,254) .. (86.5,254) .. controls (89.54,254) and (92,256.46) .. (92,259.5) .. controls (92,262.54) and (89.54,265) .. (86.5,265) .. controls (83.46,265) and (81,262.54) .. (81,259.5) -- cycle ;
%Shape: Ellipse [id:dp23277672408707772] 
\draw  [color={rgb, 255:red, 0; green, 0; blue, 0 }  ,draw opacity=1 ][fill={rgb, 255:red, 208; green, 2; blue, 27 }  ,fill opacity=1 ] (111,199.5) .. controls (111,196.46) and (113.46,194) .. (116.5,194) .. controls (119.54,194) and (122,196.46) .. (122,199.5) .. controls (122,202.54) and (119.54,205) .. (116.5,205) .. controls (113.46,205) and (111,202.54) .. (111,199.5) -- cycle ;
%Shape: Ellipse [id:dp25469902245724896] 
\draw  [color={rgb, 255:red, 0; green, 0; blue, 0 }  ,draw opacity=1 ][fill={rgb, 255:red, 208; green, 2; blue, 27 }  ,fill opacity=1 ] (141,289.5) .. controls (141,286.46) and (143.46,284) .. (146.5,284) .. controls (149.54,284) and (152,286.46) .. (152,289.5) .. controls (152,292.54) and (149.54,295) .. (146.5,295) .. controls (143.46,295) and (141,292.54) .. (141,289.5) -- cycle ;
%Shape: Ellipse [id:dp8339059316567126] 
\draw  [color={rgb, 255:red, 0; green, 0; blue, 0 }  ,draw opacity=1 ][fill={rgb, 255:red, 208; green, 2; blue, 27 }  ,fill opacity=1 ] (171,49.5) .. controls (171,46.46) and (173.46,44) .. (176.5,44) .. controls (179.54,44) and (182,46.46) .. (182,49.5) .. controls (182,52.54) and (179.54,55) .. (176.5,55) .. controls (173.46,55) and (171,52.54) .. (171,49.5) -- cycle ;
%Shape: Ellipse [id:dp8836850789925752] 
\draw  [color={rgb, 255:red, 0; green, 0; blue, 0 }  ,draw opacity=1 ][fill={rgb, 255:red, 208; green, 2; blue, 27 }  ,fill opacity=1 ] (201,139.5) .. controls (201,136.46) and (203.46,134) .. (206.5,134) .. controls (209.54,134) and (212,136.46) .. (212,139.5) .. controls (212,142.54) and (209.54,145) .. (206.5,145) .. controls (203.46,145) and (201,142.54) .. (201,139.5) -- cycle ;
%Shape: Ellipse [id:dp25676474134901084] 
\draw  [color={rgb, 255:red, 0; green, 0; blue, 0 }  ,draw opacity=1 ][fill={rgb, 255:red, 208; green, 2; blue, 27 }  ,fill opacity=1 ] (231,229.5) .. controls (231,226.46) and (233.46,224) .. (236.5,224) .. controls (239.54,224) and (242,226.46) .. (242,229.5) .. controls (242,232.54) and (239.54,235) .. (236.5,235) .. controls (233.46,235) and (231,232.54) .. (231,229.5) -- cycle ;
%Shape: Ellipse [id:dp7276624991606142] 
\draw  [color={rgb, 255:red, 0; green, 0; blue, 0 }  ,draw opacity=1 ][fill={rgb, 255:red, 208; green, 2; blue, 27 }  ,fill opacity=1 ] (261,169.5) .. controls (261,166.46) and (263.46,164) .. (266.5,164) .. controls (269.54,164) and (272,166.46) .. (272,169.5) .. controls (272,172.54) and (269.54,175) .. (266.5,175) .. controls (263.46,175) and (261,172.54) .. (261,169.5) -- cycle ;
%Shape: Ellipse [id:dp012143613453408975] 
\draw  [color={rgb, 255:red, 0; green, 0; blue, 0 }  ,draw opacity=1 ][fill={rgb, 255:red, 208; green, 2; blue, 27 }  ,fill opacity=1 ] (291,79.5) .. controls (291,76.46) and (293.46,74) .. (296.5,74) .. controls (299.54,74) and (302,76.46) .. (302,79.5) .. controls (302,82.54) and (299.54,85) .. (296.5,85) .. controls (293.46,85) and (291,82.54) .. (291,79.5) -- cycle ;
%Straight Lines [id:da5514943528211631] 
\draw [line width=3]    (417,180) -- (330,180) ;
\draw [shift={(325,180)}, rotate = 360] [fill={rgb, 255:red, 0; green, 0; blue, 0 }  ][line width=3]  [draw opacity=0] (16.97,-8.15) -- (0,0) -- (16.97,8.15) -- cycle    ;

% Text Node
\draw (-34,147) node   {$ \begin{array}{l}
	\end{array}$};
% Text Node
\draw (517,177) node   {$\langle 7,2,4,1,9,6,3,5,8\rangle$};
% Text Node
\draw (28,334) node   {$0$};
% Text Node
\draw (58,334) node   {$1$};
% Text Node
\draw (88,334) node   {$2$};
% Text Node
\draw (118,334) node   {$3$};
% Text Node
\draw (148,334) node   {$4$};
% Text Node
\draw (178,334) node   {$5$};
% Text Node
\draw (208,334) node   {$6$};
% Text Node
\draw (238,334) node   {$7$};
% Text Node
\draw (268,334) node   {$8$};
% Text Node
\draw (298,334) node   {$9$};
% Text Node
\draw (28,334) node   {$0$};
% Text Node
\draw (18,318) node   {$0$};
% Text Node
\draw (18,286) node   {$1$};
% Text Node
\draw (18,258) node   {$2$};
% Text Node
\draw (18,226) node   {$3$};
% Text Node
\draw (18,198) node   {$4$};
% Text Node
\draw (18,166) node   {$5$};
% Text Node
\draw (18,138) node   {$6$};
% Text Node
\draw (18,106) node   {$7$};
% Text Node
\draw (18,76) node   {$8$};
% Text Node
\draw (18,47) node   {$9$};

\end{tikzpicture}

\caption{An array $\langle 7, 2, 4, 1, 9, 6, 3, 5, 8\rangle$ is mapped to the 2D plane.} \label{fig:lis-grid}
\end{figure}
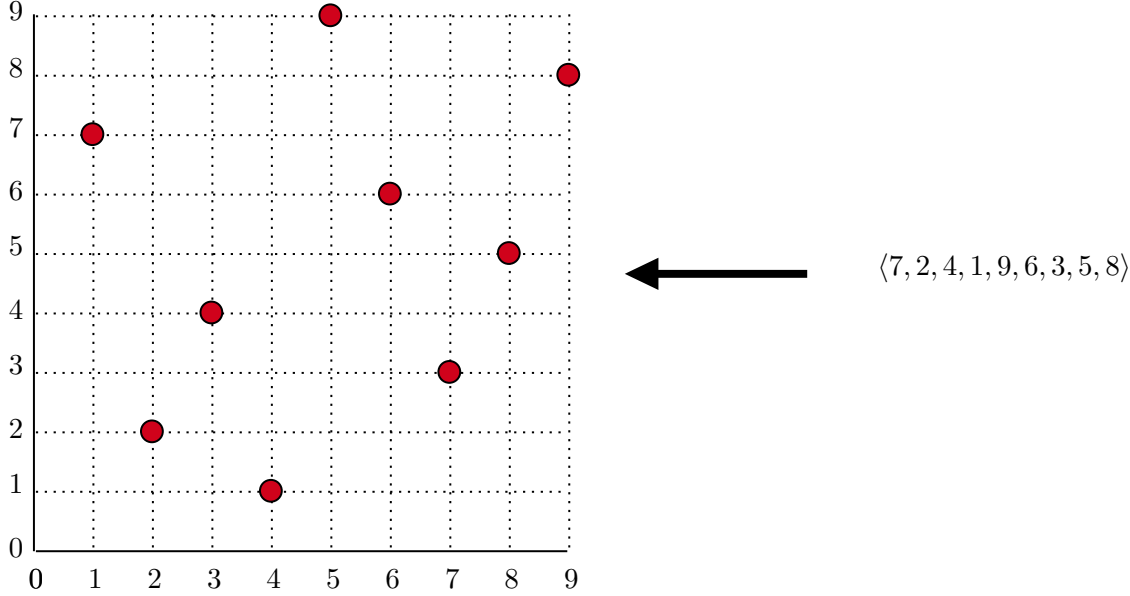

For a fixed $m < n$, divide the plane into an $m \times m$ grid where each row and column contains $n/m$ points. Also, we fix a longest increasing subsequence as the solution. The number on each cell of the grid would be equal to the contribution of the elements in that grid cell to the fixed longest increasing subsequence. (We emphasize that the number is {\em not} the longest increasing subsequence inside the cell, but the contribution to the fixed longest increasing subsequence only.)  It follows that the score of the grid is exactly equal to the size of the longest increasing subsequence. Let us assume that the score of each segment is available. To approximate the score of the grid (which equals the size of the \textsf{LIS}) we find the largest score we can obtain using non-conflicting segments by dynamic programming. The last observation which gives us speedup for \textsf{LIS} is the following:
instead of using the score of each segment (which we are not aware of), we use the size of the \textsf{LIS} for each segment as an approximate value for its score. \textsf{LIS} of each segment can be computed or approximated in sublinear time since each segment has a sublinear number of points. This quantity is clearly an upper bound on the score of each segment but can be used to construct a global solution for the entire array.
\input{figs/lis-grid2}

\subsection{Dynamic Algorithm for \textsf{LIS}}\label{sec:results-approach}
We refer the reader to previous work~\cite{our-stoc-paper,our-soda-paper} for discussions on how to use grid packing for approximating \textsf{LIS}. For the dynamic alogrithm, we consider the point-based representation of the problem. That is, we represent the input as points on the 2D plane where a point $(x,y)$ means the $x$'th element of the sequence has value $y$. This enables us to construct a grid where the rows and columns of the grid evenly divide the points. 
Next, we use the grid packing technique and after making the segments we construct a partial solution for each segment that keeps an approximation to the \textsf{LIS} of the points covered by that segment. Thus, every time a change is made, our algorithm has to update the solution for all segments that cover the modified point. Theorem~\ref{theorem:grid packing} implies that the number of such segments can be as small as $\tilde O(n^{\kappa})$ for any constant $\kappa > 0$. Moreover, since the number of elements covered by each segment is sublinear, the  total update time remains sublinear. 

In order to update the value of the longest increasing subsequence, Mitzenmacher and Seddighin prove that by doing a DP on the values of the partial solutions for the segments, we can obtain a constant fraction of the solution size for the entire sequence. (This essentially follows from the guarantee of Theorem~\ref{theorem:grid packing}.) They prove that the runtime of the DP depends only on the dimensions of the grid which by proper construction, results in a sublinear time algorithm. Moreover, they show that recursing on this idea leads to arbitrarily small update time $(\tilde O_{\epsilon}(n^\epsilon))$  for any constant $\epsilon > 0$. Since this approach loses a constant factor in every recursive call, their approximation factor is $\epsilon^{O(1/\epsilon)}$.

It follows from their technique that after reporting the estimated value of the solution, we can also determine the corresponding sequence in time proportional to its size. More precisely, after using DP to construct a global solution based on partial solutions of the segment, we can find out which segments contribute to such a solution and recursively recover the corresponding increasing subsequences of the relevant segments. To this end, in addition to the DP table which we use for constructing a global solution, we also store which segments contribute to such a solution. This way, the runtime required for determine the corresponding increasing subsequence is proportional to the size of the solution.

\bibliographystyle{abbrv}	
\bibliography{draft}

\begin{thebibliography}{1}

\bibitem{gawrychowski2020fully}
P.~Gawrychowski and W.~Janczewski.
\newblock Fully dynamic approximation of lis in polylogarithmic time, 2020.

\bibitem{gronlund2014threesomes}
A.~Gr{\o}nlund and S.~Pettie.
\newblock Threesomes, degenerates, and love triangles.
\newblock In {\em FOCS}, 2014.

\bibitem{saeednew}
T.~Kociumaka and S.~Seddighin.
\newblock Improved dynamic algorithms for longest increasing subsequence, 2020.

\bibitem{our-stoc-paper}
M.~Mitzenmacher and S.~Seddighin.
\newblock Dynamic algorithms for \textsf{LIS} and distance to monotonicity.
\newblock In {\em STOC}, 2020.

\bibitem{our-soda-paper}
M.~Mitzenmacher and S.~Seddighin.
\newblock Improved sublinear time algorithms for longest increasing
  subsequence.
\newblock In {\em SODA}, 2021.

\bibitem{pettie2003shortest}
S.~Pettie.
\newblock {\em On the shortest path and minimum spanning tree problems}.
\newblock PhD thesis, 2003.

\bibitem{yehuda1998partitioning}
R.~B. Yehuda and S.~Fogel.
\newblock Partitioning a sequence into few monotone subsequences.
\newblock {\em Acta Informatica}, 35(5):421--440, 1998.

\end{thebibliography}

\end{document}